\numberwithin{equation}{section}
\declaretheoremstyle[bodyfont=\it,qed=\qedsymbol]{noproofstyle}
\declaretheorem[name=Observation,numbered=no]{observation*}
\declaretheorem[numberlike=equation]{theorem}
\declaretheorem[name=Theorem,numbered=no]{theorem*}
\declaretheorem[numberlike=equation]{lemma}
\declaretheorem[name=Lemma,numbered=no]{lemma*}
\declaretheorem[name=Corollary,numbered=no]{corollary*}
\declaretheorem[name=Proposition,numbered=no]{proposition*}
\declaretheorem[name=Claim,numbered=no]{claim*}
\declaretheorem[name=Conjecture,numbered=no]{conjecture*}
\declaretheorem[numberlike=equation]{question}
\declaretheorem[name=Question,numbered=no]{question*}
\declaretheoremstyle[bodyfont=\it,qed=$\lozenge$]{defstyle}
\declaretheorem[unnumbered,name=Definition,style=defstyle]{definition*}
\declaretheorem[unnumbered,name=Example,style=defstyle]{example*}
\declaretheorem[unnumbered,name=Notation=defstyle]{notation*}
\declaretheorem[unnumbered,name=Construction,style=defstyle]{construction*}
\declaretheorem[unnumbered,name=Remark,style=defstyle]{remark*}
\newcommand{\shortECCC}[2]{\texttt{\href{http://eccc.weizmann.ac.il/report/\ifnumcomp{#1}{>}{93}{19}{20}#1/#2/}{eccc:TR#1-#2}}}
\newcommand{\parseECCC}[1]{
\StrSubstitute{#1}{TR}{}[\tmpstring]%
\IfSubStr{\tmpstring}{/}{ 
\StrBefore{\tmpstring}{/}[\ecccyear]%
\StrBehind{\tmpstring}{/}[\ecccreport]%
}{
\StrBefore{\tmpstring}{-}[\ecccyear]%
\StrBehind{\tmpstring}{-}[\ecccreport]%
}%
\shortECCC{\ecccyear}{\ecccreport}}
	\renewcommand{\vec}[1]{{\mathbf{#1}}}
	\newcommand{\va}{{\vec{a}}\@ifnextchar{^}{\!\:}{}}
	\newcommand{\vb}{{\vec{b}}\@ifnextchar{^}{\!\:}{}}
	\newcommand{\vc}{{\vec{c}}\@ifnextchar{^}{\!\:}{}}
	\newcommand{\vd}{{\vec{d}}\@ifnextchar{^}{\!\:}{}}
	\newcommand{\ve}{{\vec{e}}\@ifnextchar{^}{\!\:}{}}
	\newcommand{\vy}{{\vec{y}}\@ifnextchar{^}{\!\:}{}}
	\newcommand{\vs}{{\vec{s}}\@ifnextchar{^}{\!\:}{}}
	\newcommand{\vt}{{\vec{t}}\@ifnextchar{^}{\!\:}{}}
	\newcommand{\vu}{{\vec{u}}\@ifnextchar{^}{\!\:}{}}
	\newcommand{\vx}{{\vec{x}}\@ifnextchar{^}{}{}}		
	\newcommand{\vz}{{\vec{z}}\@ifnextchar{^}{\!\:}{}}
	\newcommand{\vY}{{\vec{Y}}\@ifnextchar{^}{\!\:}{}}
	\newcommand{\vX}{{\vec{X}}\@ifnextchar{^}{}{}}		
	\newcommand{\vZ}{{\vec{Z}}\@ifnextchar{^}{\!\:}{}}
	\newcommand{\vG}{{\vec{G}}\@ifnextchar{^}{\!\:}{}}
\newcommand{\F}{\mathbb{F}}
\newcommand{\N}{\mathbb{N}}
\newcommand{\Comp}{\mathbb{C}}
\newcommand{\set}[1]{\left\{#1\right\}}
\newcommand{\sps}{\sum\prod\sum}
\newcommand{\sym}{\mathrm{SYM}}
\newcommand{\cf}{\mathrm{Coeff}}
\def\epsilon{\varepsilon}
\date{}
\title{On top fan-in vs formal degree for depth-$3$ arithmetic circuits}
\author{
Mrinal Kumar\thanks{Center for Mathematical Sciences and Applications, Harvard University, Cambridge, Massachusetts, USA. Email: \texttt{mrinalkumar08@gmail.com}. }}
\begin{document}
\maketitle

\begin{abstract}
We show that over the field of complex numbers, \emph{every} homogeneous polynomial of degree $d$ can be approximated (in the border complexity sense) by a depth-$3$ arithmetic circuit of top fan-in at most $d+1$. This is quite surprising since there exist homogeneous polynomials $P$ on $n$ variables of degree $2$, such that any depth-$3$ arithmetic circuit computing $P$ must have top fan-in at least $\Omega(n)$. 
 
As an application, we get a new tradeoff between the top fan-in and formal degree in an approximate analog of the celebrated depth reduction result of Gupta, Kamath, Kayal and Saptharishi~\cite{gkks13b}. Formally, we show that if a degree $d$ homogeneous  polynomial $P$ can be computed by an arithmetic circuit of size $s$, then for every $t \leq d$, $P$ is in the border of a depth-$3$ circuit of top fan-in $s^{O(t)}$ and formal degree $s^{O(d/t)}$. To the best of our knowledge, the upper bound on the top fan-in in the original proof of~\cite{gkks13b} is always at least $s^{O(\sqrt{d})}$, regardless of the formal degree.
\end{abstract}

\thispagestyle{empty}
\pagenumbering{arabic}
\section{Introduction}
An arithmetic circuit on variables $\vx = (x_1, x_2, \ldots, x_n)$ over a field $\F$ is a directed acyclic graph with leaves labeled by variables in $\vx$ and constants in $\F$ and internal vertices labeled by sum $(+)$ or product $(\times)$. Such a circuit provides a natural succinct representation for multivariate polynomials in $\F[\vx]$. In this paper, the principle object of interest will be arithmetic circuits of depth-$3$, which we now define.
 
A depth-$3$  circuit (denoted by $\sps$) is an arithmetic circuit whose internal gates are arranged in three layers of alternating sums and products with the top layer being a sum. Such a circuit gives a representation of a polynomial as a sum of products of affine forms. 
Two of the parameters of interest of a $\sps$ circuit $C$ are the fan-in of the top layer, which we call the top fan-in of $C$ and the maximum of the degrees of its product gates, which we call the formal degree of $C$. 

A crucial point to note is that the formal degree of a circuit $C$ can be much much larger than the degree of the polynomial computed by $C$. Such a $\sps$ circuit computes polynomials of really high degree by taking products of affine forms, and then takes a linear combination of many such high degree polynomials to \emph{efficiently} compute a much lower degree polynomial via cancellations. One classic example of this is a result of Ben-Or~\cite{nw1997} who showed that over large enough fields, for every degree $d \in [n]$, the elementary symmetric polynomial of degree $d$ in variables $\vx$ defined as 
\[
\sym_d(\vx) = \sum_{S \in \binom{[n]}{d}} \prod_{j \in S} x_j \, , 
\]
can be computed by a $\sps$ circuit with top fan-in $n+1$ and formal degree $n$. In sharp contrast,  Nisan and Wigderson~\cite{nw1997} had earlier shown that any $\sps$ circuit computing $\sym_d(\vx)$ of formal degree at most $O(d)$ must have top fan-in at least $n^{\Omega(d)}$ (for a very large range of choice of $d$). Thus, higher formal degree indeed helps make the computation efficient. 

Another recent example of the power non-homogeneous depth-$3$ circuits is a beautiful result of Gupta, Kamath, Kayal and Saptharishi~\cite{gkks13b}, who showed that over the field $\Comp$, there is a $\sps$ circuit of formal degree $\exp(O(\sqrt{n}\log n))$ and top fan-in $\exp(O(\sqrt{n}\log n))$ which computes the determinant of an $n\times n$ symbolic matrix. Prior to this work, the best $\sps$ circuit known for the determinant had size $\exp(\Omega(n\log n))$. In fact, in~\cite{gkks13b}, the authors prove something stronger. They show that over the field $\Comp$, if a homogeneous $n$-variate polynomial of degree $d$ can be computed by an arithmetic circuit of size $s$, then it can be computed by an $\sps$ circuit of top fan-in $\exp(O(\sqrt{d}\log s))$ and formal degree $\exp(O(\sqrt{d}\log s))$. 

Thus, increasing the the formal degree of $\sps$ circuits helps reduce their top fan-in (and size) while preserving the expressiveness. The main motivation for this work is to explore this tradeoff between the formal degree and top fan-in better. In particular, the following question is of interest to us. 
\begin{question}\label{q:1}
In the depth reduction results of~\cite{gkks13b}, can we increase the formal degree of the resulting $\sps$ circuit further and obtain an $\sps$ circuit with a smaller top fan-in ? 
\end{question}
To the best of our understanding, the upper bound on the top fan-in of the $\sps$ circuit 
obtained by depth reducing a general circuit of size $s$ computing a degree $d$ polynomial is $s^{O(t + d/t)}$, and the formal degree is $s^{O(t)}$. Here, $t \in [d]$. Thus, regardless of the choice of $t$ and the resulting formal degree obtained, the top fan-in upper bound is always $s^{\Omega(\sqrt{d})}$. Before we state our results, we make a brief tour into the realm of approximative or border algebraic computation, since the notion is crucial to our results here.  
\subsection{Approximative or Border Complexity}
Let $\Phi : \F[\vx] \rightarrow \N$ be a measure of complexity of multivariate polynomials with respect to any reasonable model of computation.  For instance, we can think of $\Phi(P)$ 
to be circuit complexity, formula complexity, or depth-$3$ circuit complexity of $P$. We say that $P$ has border complexity with respect to $\Phi$ at most $s$ (denoted as $\overline{\Phi}(P) \leq s$) iff there are polynomials $Q_0, Q_1, Q_2, \ldots, Q_t \in \Comp[\epsilon, \vx]$ such that the polynomial $Q \equiv P + \sum_{j = 1}^t \epsilon^j\cdot Q_j$ satisfies $\Phi(Q) \leq s$. For brevity, we say that the polynomial $Q$ \emph{approximates} the polynomial $P$. This notion of approximation is often referred to as the \emph{algebraic} approximation, as opposed to the usual \emph{topological} notion. For this paper, we only work with algebraic approximation upper bounds, which trivially imply upper bounds in the topological sense as well. We refer the reader to an excellent discussion on border complexity in the work of Bringmann et al.~\cite{BIZ17}. 

It follows from the definitions that $\Phi(P) \leq s$ trivially implies $\overline{\Phi}(P) \leq s$, but in general we do not know implications in the other direction. In particular, it is potentially easier to prove border complexity upper bounds for a model, and harder to prove border complexity lower bounds. And, indeed we know some upper bounds in the border complexity framework which are either false, or not known in the exact computation framework. We state two such results. 

\paragraph*{Low degree factors of polynomials with small circuits. }In~\cite{B04}, B{\"{u}}rgisser showed that if a polynomial $P \in \Comp[\vx]$ has circuit complexity at most $s$, and $f$ is an irreducible factor of $P$ of degree $d$, then $f$ has border circuit complexity at most $\poly(s, d)$. In particular, this upper bound does not depend on the degree of $P$ itself, which could be as large as $2^s$! For exact computation, we only know that $f$ can be computed by  an arithmetic circuit of size $\poly(s, d, m)$, where $m$ is the largest integer such that $f^m $ divides $P$. Note that $m$ can be as large as $\exp(\Omega(s))$, and hence the bound is not always polynomially bounded in $s$ and $d$. Extending the results in~\cite{B04} continues to be a fascinating and fundamental open problem.

\paragraph*{Width-$2$ algebraic branching programs. }
 In~\cite{BIZ17}, Bringmann, Ikenmeyer and Zuiddam showed that over all fields of characteristic different from $2$, if  a polynomial $P$ of degree $d$ has an arithmetic formula of size $s$, then $P$ is in the border of width two algebraic branching programs of size at most $\poly(s)$. This is the approximative version of a strengthening of a classical result of Ben-Or and Cleve~\cite{bc88} who showed that if a degree $d$ polynomial $P$ has an arithmetic formula of size $s$, then $P$ can be computed by a width-$3$ algebraic branching program of size $\poly(s)$. 
 The result in~\cite{BIZ17} is surprising, since we know that an analog of the result of Ben-Or and Cleve is false for width-$2$ algebraic branching programs. In fact, Allender and Wang~\cite{AW16} showed that width-$2$ algebraic branching programs are not even complete, i.e. there are polynomials which they cannot compute regardless of the size. 
 
\subsection{Results}
Our main result is the following theorem. 
\begin{theorem}[Approximating general polynomials]\label{thm:approx general}
Let $P\in \Comp[\vx]$ be any homogeneous polynomial of degree $d$. Then, there exists a $\sps$ circuit $C \in \Comp(\epsilon)[\vx]$ with top fan-in at most $d+1$ and formal degree at most $\left(d\cdot 2^d \cdot \binom{n+d-1}{d-1}\right)$ such that \[
C \equiv P + \epsilon Q\, ,
\] where $Q \in \Comp[\epsilon, \vx]$  and every monomial with a non-zero coefficient in $Q$ has degree strictly greater than $d$.  
\end{theorem}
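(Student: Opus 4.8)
The plan is to reduce the problem to approximating a single monomial of degree $d$, and then to approximate $P$ monomial-by-monomial. More precisely, write $P = \sum_{\alpha} c_\alpha \vx^\alpha$, where the sum is over the $\binom{n+d-1}{d-1}$ multi-indices $\alpha$ of degree exactly $d$ (the paper's binomial is indexed $d-1$, matching the number of degree-$d$ monomials in $n$ variables). If each monomial $\vx^\alpha$ can be approximated (in the border sense, with error terms of degree $> d$) by a $\sps$ circuit of top fan-in $r$ and formal degree $D$, then summing these circuits, scaled by $c_\alpha$, gives a $\sps$ circuit for $P$ — but with top fan-in $r \cdot \binom{n+d-1}{d-1}$, which is far too large. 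So the real content must be a way to fold \emph{all} the monomials of degree $d$ into a single family of $d+1$ product gates simultaneously. The natural device for this is interpolation: a product gate $\prod_{i=1}^{D}\ell_i(\vx)$ whose affine forms $\ell_i$ depend on $\epsilon$ can, upon expansion in $\epsilon$, have its degree-$d$-in-$\vx$ part equal to an arbitrary prescribed homogeneous degree-$d$ polynomial, provided we are willing to use high formal degree $D$ and a handful of such gates combined with interpolation coefficients (à la the standard trick that $\prod_i(1+\epsilon x_i)$ has $\epsilon^k$-coefficient equal to $\sym_k(\vx)$, generalized).

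Concretely, I would first establish the key lemma: for a single degree-$d$ monomial $x_{i_1}\cdots x_{i_d}$ (with repetitions allowed), there is a product of affine forms of the shape $\prod_{j}(1 + \epsilon^{a_j} y_j)$ or similar, whose coefficient of $\epsilon^{m}$ (for an appropriate target power $m$) is exactly that monomial plus higher-$\vx$-degree junk; the exponents $a_j$ are chosen so that the only way to reach total $\epsilon$-degree $m$ using $\le d$ factors while staying in degree $\le d$ in $\vx$ is the intended one. This is a Kronecker-substitution / distinct-subset-sums style argument: assign to the $2^d$ monomials (or to the $\binom{n+d-1}{d-1}$ monomials) distinct $\epsilon$-exponents from a Sidon-type set, so that a single product gate $\prod (1 + \epsilon^{e(\text{block})} z_{\text{block}})$, when expanded, separates monomials by their $\epsilon$-degree. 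The factors $d \cdot 2^d$ in the claimed formal degree strongly suggest that each of the $d$ "slots" of a monomial is encoded by a block of roughly $2^d$ affine factors (one power of $\epsilon$ per subset pattern), and the extra factor of $\binom{n+d-1}{d-1}$ comes from laying these blocks side by side for all monomials. The top fan-in $d+1$ then comes from needing $d+1$ evaluation points to interpolate and extract the homogeneous-degree-$d$ component (a polynomial in one auxiliary scaling variable of degree $d$ is determined by $d+1$ values), exactly as in Ben-Or's construction for $\sym_d$ cited in the introduction.

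So the skeleton is: (1) introduce a scaling variable $t$ and replace $\vx$ by $t\vx$, so that the degree-$d$ part of any polynomial is the $t^d$-coefficient; (2) build one product gate — a product of at most $d \cdot 2^d \cdot \binom{n+d-1}{d-1}$ affine forms in $(\epsilon, t, \vx)$ — designed so that, modulo $\epsilon$-higher-order and $\vx$-degree-$>d$ terms, picking out one specific $\epsilon$-power recovers $P(t\vx)$'s relevant piece; (3) interpolate over $d+1$ values of $t$ to kill the lower-degree-in-$t$ contributions, yielding the top fan-in $d+1$; (4) verify that the leftover error $Q$ has all monomials of $\vx$-degree $> d$, which is automatic if the construction only ever produces, at the targeted $\epsilon$-power, the degree-$d$ part plus strictly higher-$\vx$-degree contamination. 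The main obstacle I anticipate is step (2): arranging the $\epsilon$-exponents on the affine factors so that the cross-terms across different monomial-blocks either vanish or land at $\epsilon$-powers strictly higher than the one we read off, while keeping the total number of factors down to the stated bound. This is a combinatorial design problem — essentially constructing a weighting on (slot, monomial) pairs with a unique-representation property — and getting the count exactly $d \cdot 2^d \cdot \binom{n+d-1}{d-1}$ rather than something larger will require a careful choice (the $2^d$ hints at indexing by subsets of the $d$ slots, perhaps to encode which slots a given affine form "belongs to"). Everything else — the interpolation to get fan-in $d+1$, the homogeneity bookkeeping on the error term, and assembling the final $\sps$ circuit over $\Comp(\epsilon)$ — should be routine once the design lemma is in hand.
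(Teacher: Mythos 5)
There is a genuine gap, and it is exactly where you anticipate it: step~(2). The paper does not build the product gate via an $\epsilon$-exponent / Sidon-set design; it instead invokes a nontrivial cited theorem of Shpilka (\autoref{thm:shp general}, Theorem~2.1 of~\cite{shp02}) stating that \emph{every} homogeneous degree-$d$ polynomial $P$ equals $\sym_d(\ell_1, \ldots, \ell_m)$ for homogeneous linear forms $\ell_1,\ldots,\ell_m$ with $m \leq d \cdot 2^d \cdot \mathrm{Sparsity}(P) \leq d \cdot 2^d \cdot \binom{n+d-1}{d-1}$. Since $\sym_d(\ell_1,\ldots,\ell_m)$ is precisely the degree-$d$ homogeneous component of the \emph{single} product $\prod_{i=1}^m(\ell_i + 1)$, your steps (1), (3), (4) --- the scaling substitution, interpolation over $d+1$ points to get the top fan-in $d+1$, and the bookkeeping on the error --- finish the job exactly as you describe, matching the paper's \autoref{lem:extracting homog components} and the final substitution $x_i \mapsto \epsilon x_i$ followed by division by $\epsilon^d$. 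The formal degree of the resulting circuit is simply $m$. In particular the factor $2^d$ has nothing to do with indexing subsets of $d$ slots; it is an artifact of the proof of Shpilka's theorem, and $\binom{n+d-1}{d-1}$ is just the sparsity bound.

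Without that theorem, step (2) is asking you to produce, from scratch, a representation of an arbitrary homogeneous $P$ as the degree-$d$ homogeneous part of a single product of affine forms --- which \emph{is} a form of Shpilka's result --- and your proposed $\epsilon$-power-separation device does not obviously get there. The tension is structural: all of $P$'s monomials must ultimately contribute at a common $\epsilon$-power (the one you divide out and send to $0$), with cross-terms pushed to strictly higher $\epsilon$-powers, and this must be realized by \emph{affine} factors only. Assigning distinct monomials distinct $\epsilon$-exponents to kill cross-terms is at odds with having them all land on the same target coefficient; placing all monomial blocks at the same $\epsilon$-power returns you to the problem of encoding an arbitrary degree-$d$ monomial as (part of) a product of affine forms with controlled error, which is again the content of Shpilka's theorem (or, in other guises, Fischer's identity and Saxena's duality, which the paper uses for \autoref{lem:approx duality}). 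So the interpolation skeleton is right, but it hangs on a structural result you would need to either cite or reprove; the Sidon-set sketch does not substitute for it.
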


Thus, \emph{every} homogeneous polynomial of degree $d$ is in the border of $\sps$ circuits with top fan-in at most $d+1$. Of course, the upper bound on the formal degree is extremely high, and up to lower order terms, this is unavoidable due to counting arguments. We remark that this result is a bit surprising since it known to be false in the realm of exact computation. More formally, the following folklore lemma is well known (at least implicitly).
\begin{lemma}[Follows from Lemma 4.9 in~\cite{sw2001}, Lemma A.1 in~\cite{CGJW016}]
Any $\sps$ circuit computing the inner product polynomial $IP = \sum_{i = 1}^n x_iy_i$ must have top fan-in $\Omega(n)$, regardless of the formal degree. 
\end{lemma}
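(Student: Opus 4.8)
The plan is to prove the contrapositive by a dimension count: a $\sps$ circuit of small top fan-in cannot involve enough linearly independent linear forms to write down a non-degenerate quadratic form in $2n$ variables, no matter how large its formal degree is.

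First I would record the easy half of this count. Suppose $IP = \sum_{j=1}^k T_j$ with each $T_j = \prod_{\ell} L_{j\ell}$ a product of affine forms, and let $V$ be the $\Comp$-span of the homogeneous linear parts of all the $L_{j\ell}$. Choosing a basis $\mu_1,\dots,\mu_m$ of $V$, the circuit exhibits $IP$ as a polynomial $F(\mu_1,\dots,\mu_m)$. But $IP$ is a \emph{non-degenerate} quadratic form on $\Comp^{2n}$: its associated symmetric bilinear form is the block form with two off-diagonal identity blocks, which is invertible. Therefore $IP$ cannot be written as a polynomial in fewer than $2n$ independent linear forms: if $m < 2n$, then any nonzero $v \in \bigcap_t \ker \mu_t$ satisfies $\mu_t(w+sv)=\mu_t(w)$ for all $w,s$, hence $IP(w+sv)=IP(w)$ identically, and comparing the coefficient of $s$ forces $v$ into the radical of the bilinear form, a contradiction. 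Hence $\dim_{\Comp} V \ge 2n$.

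The heart of the proof is the other half: the linear forms occurring in a $\sps$ circuit of top fan-in $k$ computing a polynomial of degree $2$ can be taken to span a space of dimension $O(k)$, \emph{independently of the formal degree and of $n$}. This is where one must use the structure theory of $\sps$ identities, and it is the content of \cite[Lemma 4.9]{sw2001} and \cite[Lemma A.1]{CGJW016}. Concretely the steps are: (i) divide out $g = \gcd(T_1,\dots,T_k)$; since $g \mid IP$, $g$ is a product of at most two affine forms, and $\sum_j (T_j/g) = IP/g$ still has degree at most $2$ with trivial gcd; (ii) pass to a minimal such circuit; (iii) for each formal degree level $D' > 2$, the degree-$D'$ homogeneous component of the identity $\sum_j(T_j/g)=IP/g$ must vanish, so the leading forms $\mathrm{Hom}_{D'}(T_j/g)$ of the gates of formal degree $D'$ form an identically-zero homogeneous $\sps$ sub-circuit of top fan-in at most $k$; a rank/Sylvester--Gallai-type bound then controls the span of the linear forms appearing at that level in terms of $k$ alone (and not of $D'$), and assembling these across all levels bounds $\dim_{\Comp} V$ by $O(k)$. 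Together with the previous paragraph this gives $2n = O(k)$, i.e.\ $k = \Omega(n)$, regardless of the formal degree.

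The main obstacle is step (iii) --- equivalently, the cited lemmas. In isolation a single product gate $\prod_{\ell=1}^{D}L_{j\ell}$ with $D \gg n$ may involve linear forms spanning an arbitrarily large space, so the bounded-span conclusion is genuinely forced only by the identity $\sum_j T_j = IP$ together with minimality, and making it quantitative requires carefully tracking common factors, vanishing sub-circuits, and the cancellation of \emph{all} homogeneous components of degree greater than $2$. A cheap sanity check that high formal degree cannot help without bound: working modulo the ideal $(\vx, \vy)^3$, every product gate with three or more constant-term-free factors dies, so only gates of a very restricted shape survive the passage to the degree-$\le 2$ part of the identity.
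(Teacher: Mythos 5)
The paper itself gives no proof here --- it defers to Lemma~4.9 of~\cite{sw2001} and Lemma~A.1 of~\cite{CGJW016} --- so I am assessing your sketch on its own terms. Your ``easy half'' (that $IP$ cannot be written as a polynomial in fewer than $2n$ linearly independent linear forms, because its bilinear form is nondegenerate) is correct and cleanly argued. The problem is the ``hard half,'' which contains a genuine error. For a gate $T_j$ of formal degree $D_j > D'$, its degree-$D'$ homogeneous component $\mathrm{Hom}_{D'}(T_j)$ is \emph{not} a product of linear forms --- it is an elementary-symmetric-type combination of products of $D'$-subsets of the linear parts weighted by the constant terms of the remaining factors. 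So for any $D'$ strictly below the maximal formal degree, the vanishing of the degree-$D'$ component of $\sum_j T_j = IP$ is \emph{not} a homogeneous $\sps$ identity among the gates of formal degree $D'$, and your ``sub-circuit'' at level $D'$ does not exist. Only the top degree $D_{\max}$ gives a clean $\sps$ identity. Even granting the per-level rank bound, ``assembling across all levels'' would sum over as many levels as the formal degree, yielding a bound that depends on $D_{\max}$ rather than $O(k)$ --- exactly the dependence you are trying to remove. I am also not convinced the intermediate claim ``$\dim_{\Comp} V = O(k)$ after passing to a minimal simple circuit'' is what the cited lemmas prove, or that it is true at all in the form you need; it smuggles in Sylvester--Gallai-type rank bounds for $\sps$ identities, which are far heavier machinery than this lemma requires.

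The standard route (and, I believe, what the cited lemmas actually do) avoids bounding $\dim V$ entirely and is elementary: iteratively kill gates by affine restrictions. Given $\sum_{j=1}^k T_j = IP$, as long as some surviving gate $T_j$ is not a constant on the current affine subspace, it has a non-constant affine factor $L$; restrict further to $L = 0$, which annihilates $T_j$. Each such restriction drops at least one active gate and is a codimension-$1$ affine restriction, which reduces the rank of the degree-$2$ part of the right-hand side by at most $2$. After at most $k$ restrictions every gate is a constant or zero, so the left side has zero quadratic part, while the right side's quadratic part still has rank at least $2n - 2k$. Hence $2n - 2k \le 0$, i.e.\ $k \ge n$, with no reference to the formal degree. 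This is both simpler and fills exactly the gap in your step~(iii).
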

 Thus, the exact computation analog of~\autoref{thm:approx general} is false in a very strong sense, even for polynomials of degree $2$. 
 
An interesting implication of~\autoref{thm:approx general} is that one cannot hope to prove super linear top fan-in lower bound for $\sps$ circuits for polynomials of degree $O(n)$ without taking into account the formal degree, provided the lower bound also applies to the border of $\sps$ circuits. Most of the known lower bound results for arithmetic circuits do in fact extend to the border of the corresponding complexity class. 

Our second theorem is a special case of~\autoref{thm:approx sum of powers} for sums of powers of linear forms. The upper bound on the formal degree of the approximating $\sps$ circuit obtained here is much better.  
\begin{theorem}[Approximating sums of powers of linear forms]\label{thm:approx sum of powers}
Let $P = \sum_{i = 1}^T \ell_i^d$ be any homogeneous polynomial of degree $d$ in $\Comp[\vx]$, where each $\ell_i$ is a homogeneous linear form. Then, there exists a $\sps$ circuit $C \in \Comp(\epsilon)[\vx]$ with top fan-in at most $d+1$ and formal degree at most $\left(d\cdot T\right)$ such that \[
C \equiv P + \epsilon Q\, ,
\] where $Q \in \Comp[\epsilon, \vx]$  and every monomial with a non-zero coefficient in $Q$ has degree strictly greater than $d$.  
\end{theorem}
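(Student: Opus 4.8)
The plan is to exhibit an explicit $\sps$ circuit with only two product gates (in particular, top fan-in at most $d+1$) that computes a polynomial of the form $P + \epsilon Q$ with the claimed error structure. The starting point is the classical factorization over $\Comp$,
\[
1 - w^d = \prod_{k=0}^{d-1}(1 - \zeta^k w),
\]
where $\zeta = \zeta_d$ is a primitive $d$-th root of unity; this holds because the right-hand side is a polynomial in $w$ of degree $d$ with leading coefficient $(-1)^d\zeta^{\binom{d}{2}} = -1$ and roots exactly the $d$-th roots of unity, hence equals $-(w^d - 1)$. Substituting $w = \epsilon\,\ell_i$ gives
\[
1 - \epsilon^d\ell_i^d = \prod_{k=0}^{d-1}(1 - \zeta^k\epsilon\,\ell_i),
\]
which presents $1 - \epsilon^d\ell_i^d$ as a product of $d$ affine forms in $\vx$ whose coefficients lie in $\Comp(\epsilon)$.

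With this in hand I would form the single product gate
\[
\Pi := \prod_{i=1}^{T}(1 - \epsilon^d\ell_i^d) = \prod_{i=1}^{T}\prod_{k=0}^{d-1}(1 - \zeta^k\epsilon\,\ell_i),
\]
a product of exactly $dT$ affine forms, hence of formal degree $dT$. Expanding the product over subsets of $[T]$ yields
\[
\Pi = \sum_{m=0}^{T}(-1)^m\epsilon^{dm}\,e_m(\ell_1^d,\ldots,\ell_T^d),
\]
where $e_m$ is the $m$-th elementary symmetric polynomial; in particular $e_0 = 1$ and $e_1(\ell_1^d,\ldots,\ell_T^d) = \sum_{i=1}^T\ell_i^d = P$. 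Hence $1 - \Pi$ is divisible by $\epsilon^d$, and I would take
\[
C := \epsilon^{-d}(1 - \Pi) = P + \sum_{m=2}^{T}(-1)^{m-1}\epsilon^{d(m-1)}\,e_m(\ell_1^d,\ldots,\ell_T^d) = P + \epsilon\,Q,
\]
with $Q := \sum_{m=2}^{T}(-1)^{m-1}\epsilon^{\,d(m-1)-1}\,e_m(\ell_1^d,\ldots,\ell_T^d)$. Since $d(m-1)-1 \ge 0$ for all $m\ge 2$ (using $d\ge 1$), we get $Q \in \Comp[\epsilon,\vx]$, and every monomial of $e_m(\ell_1^d,\ldots,\ell_T^d)$ is homogeneous of $\vx$-degree $dm > d$, so every monomial with a nonzero coefficient in $Q$ has degree strictly greater than $d$. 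Finally, writing $C = \epsilon^{-d}\cdot 1 - \epsilon^{-d}\cdot\Pi$ displays $C$ as a $\sps$ circuit in $\Comp(\epsilon)[\vx]$ with top fan-in $2 \le d+1$ (one gate for the scalar, one gate for $\Pi$) and formal degree $dT$, which is exactly the statement.

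The manipulations above are routine; the one point that has to be gotten right is the choice of $\epsilon$-scaling. The naive move is to work with $\prod_i(\epsilon^d - \ell_i^d)$ and normalize so that its degree-$d$ part becomes $P$; but there the higher elementary symmetric terms $e_m$ carry \emph{decreasing} powers of $\epsilon$, so after normalization they reappear with negative powers of $\epsilon$ and one does not get a polynomial error $Q$. Scaling the linear forms down by $\epsilon$ — i.e.\ using $1 - \epsilon^d\ell_i^d$ — forces $e_m$ to be multiplied by $\epsilon^{dm}$, so the error lands at strictly positive $\epsilon$-order, which is what makes $Q$ a genuine polynomial with the degree bound. The $d$-th-root-of-unity factorization plays a complementary role: it is what lets $\Pi$ be written over affine forms at all while keeping its formal degree down to $dT$ (only $d$ factors per linear form).
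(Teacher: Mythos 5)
Your proof is correct, and it takes a genuinely different route from the paper — in fact achieving top fan-in $2$ rather than $d+1$. The paper proves this theorem exactly as it proves \autoref{thm:approx general}: it invokes Shpilka's identity (\autoref{thm:shp sum of powers}) to write $P$ as $\pm\sym_d$ of the $dT$ linear forms $\{\omega^k\ell_i\}$, uses Ben-Or's observation that this $\sym_d$ is the degree-$d$ homogeneous component of a single product of $dT$ affine forms, and then applies the interpolation lemma (\autoref{lem:extracting homog components}) to approximately extract that component as a linear combination of $d+1$ scaled copies of the product. Your argument skips the interpolation entirely: because your affine factors come in groups of $d$ with $\prod_{k=0}^{d-1}(1-\zeta^k\epsilon\ell_i)=1-\epsilon^d\ell_i^d$, the product $\Pi$ has nonzero homogeneous parts only at $\vx$-degrees that are multiples of $d$, so $1-\Pi$ is supported on degrees $d, 2d, \ldots$ and a single division by $\epsilon^d$ already gives $P+\epsilon Q$ with $Q$ a polynomial supported on degrees $>d$. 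The degree-$j$ contamination for $0<j<d$, which \autoref{lem:extracting homog components} eliminates at the cost of $d+1$ evaluations, is simply absent; only the constant term needs explicit cancellation, hence two summands total. This roots-of-unity structure is exactly what is missing in the general case of \autoref{thm:approx general}, where the forms from \autoref{thm:shp general} are arbitrary and the intermediate degree parts genuinely appear — which is why the paper routes both theorems through the interpolation lemma, while you get to shortcut this one. Your remark about the alternative scaling $\prod_i(\epsilon^d-\ell_i^d)$ correctly pinpoints why that variant would fail to produce a polynomial error.
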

Our final result answers~\autoref{q:1} in the affirmative in the border complexity sense. We prove the following statement.
\begin{theorem}[Top fan-in vs formal degree for chasm at depth-$3$]\label{thm:approx d3 chasm}
Let $P \in \Comp[\vx]$ be a homogeneous polynomial of degree $d$ which is computable by an arithmetic circuit of size $s$. Then, for every $t \in [d]$, there is a $\sps$ circuit $C_{t} \in \Comp(\epsilon)[\vx]$ of top fan-in at most $s^{O(t)}$ and formal degree $s^{O(d/t)}$ such that 
\[
C_{t} \equiv P + \epsilon Q\, ,
\] where $Q \in \Comp[\epsilon, \vx]$  and every monomial with a non-zero coefficient in $Q$ has degree strictly greater than $d$.  
\end{theorem}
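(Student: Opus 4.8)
\emph{Step 1 (depth reduction).} The plan is to graft the approximation guarantee of \autoref{thm:approx general} (or \autoref{thm:approx sum of powers}) onto the bottom layer of the classical homogeneous ``chasm at depth $4$'' (Agrawal--Vinay, Koiran, Tavenas; cf.\ \cite{gkks13b}), invoked with the right parameter. Since $P$ is homogeneous of degree $d$ and has a circuit of size $s$, for the given $t \in [d]$ we may write
\[
 P \;=\; \sum_{i=1}^{m}\ \prod_{j=1}^{a_i} Q_{ij}\, ,
\]
where $m \le s^{O(t)}$, each $a_i \le O(t)$, and each $Q_{ij}$ is homogeneous of degree $\delta_{ij}$ with $\delta_{ij} \le \delta := \lceil d/t\rceil$ and $\sum_{j}\delta_{ij} = d$ for every $i$. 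The crucial feature is that the top fan-in $m$ of this depth-$4$ circuit is controlled by the \emph{number of factors} per product gate, which we take to be $O(t)$; the full price of a large degree budget is paid in the \emph{bottom degree} $\delta = O(d/t)$, not in the top fan-in.

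\emph{Step 2 (approximate each brick).} By \autoref{thm:approx general} applied to $Q_{ij}$ (homogeneous of degree $\delta_{ij}\le\delta$), there is a $\sps$ circuit $C_{ij}\in\Comp(\epsilon)[\vx]$ with top fan-in at most $\delta+1$ and formal degree at most $D := \delta\cdot 2^{\delta}\cdot\binom{n+\delta-1}{\delta-1}$ such that $C_{ij}\equiv Q_{ij}+\epsilon Q_{ij}'$, where every monomial of $Q_{ij}'$ has degree strictly greater than $\delta_{ij}$. The point is that the top fan-in of each $C_{ij}$ is only $O(d/t)$, while the single large parameter, the formal degree $D$, is still $s^{O(d/t)}$: using $n = O(s)$ and $d = \mathrm{poly}(s)$ (as is standard here and in \cite{gkks13b}) we get $2^{\delta}\le s^{O(d/t)}$ and $\binom{n+\delta-1}{\delta-1}\le n^{\delta}\le s^{O(d/t)}$. (Alternatively, first rewrite $Q_{ij}$ as a sum of at most $2^{\delta}\binom{n+\delta-1}{\delta-1}$ many $\delta_{ij}$-th powers of linear forms and apply \autoref{thm:approx sum of powers}, which gives the same bound on $D$ up to the $\delta$ factor.)

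\emph{Step 3 (assemble, check the approximation, collapse the depth).} Set $C_t := \sum_{i=1}^{m}\prod_{j=1}^{a_i}C_{ij}$. Expanding $\prod_j(Q_{ij}+\epsilon Q_{ij}')$, every term other than $\prod_j Q_{ij}$ carries a positive power of $\epsilon$ and, since $Q_{ij}$ is homogeneous of degree $\delta_{ij}$ and $Q_{ij}'$ is supported on degrees $>\delta_{ij}$, has total $\vx$-degree strictly larger than $\sum_j\delta_{ij}=d$; hence $C_t\equiv P+\epsilon Q$ with $Q\in\Comp[\epsilon,\vx]$ supported only on monomials of degree $>d$, as required. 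Moreover each $C_{ij}$ is a $\sps$ circuit of top fan-in $\le\delta+1$, so expanding $\prod_{j=1}^{a_i}C_{ij}$ by choosing one summand from each of the $a_i$ top sums turns this product into a single $\sps$ circuit of top fan-in $\le(\delta+1)^{a_i}$ and formal degree $\le a_i D$. Summing over $i$, $C_t$ is a $\sps$ circuit over $\Comp(\epsilon)$ of top fan-in at most $m\cdot(\delta+1)^{O(t)} \le s^{O(t)}\cdot(\lceil d/t\rceil+1)^{O(t)} = s^{O(t)}$ and formal degree at most $O(t)\cdot D \le s^{O(d/t)}$.

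\emph{Main obstacle.} The only genuine content is pinning down the correct form of the depth-$4$ reduction in Step 1 --- namely that the top fan-in can be made $s^{O(t)}$ using only $O(t)$ factors per product gate, independently of the bottom degree. This is exactly the regime in which \autoref{thm:approx general}/\autoref{thm:approx sum of powers} are cheap: a degree-$O(d/t)$ brick becomes a $\sps$ gadget of top fan-in only $O(d/t)+1$ at the affordable cost of formal degree $s^{O(d/t)}$, whereas the naive exact conversion of such a brick to depth $3$ would blow up the top fan-in by a further $2^{O(d/t)}$ factor and destroy the bound. Everything after Step 1 is bookkeeping: a product of algebraic approximations is an algebraic approximation (with the error staying in high degree thanks to homogeneity), and a product of $\sps$ circuits with bounded top fan-in flattens to a single $\sps$ circuit.
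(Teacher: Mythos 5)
Your proposal is correct and is essentially the paper's first proof of \autoref{thm:approx d3 chasm}: both apply the homogeneous depth-$4$ reduction with bottom degree $\approx d/t$ (yielding $O(t)$ factors per product gate and top fan-in $s^{O(t)}$), replace each degree-$O(d/t)$ bottom brick by its border $\sps$ gadget from \autoref{thm:approx general}, and flatten by distributing the $O(t)$-ary middle product over the $(O(d/t)+1)$-ary top sums. The paper also records a second, genuinely different proof via an approximate version of Saxena's duality lemma (\autoref{lem:approx duality}) inside the $\sum\bigwedge\sum\bigwedge\sum$ normal form of~\cite{gkks13b}, but your route coincides with the paper's first proof.
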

As remarked earlier, this tradeoff is in contrast to the original result of Gupta, Kamath, Kayal and Saptharishi~\cite{gkks13b} where the top fan-in is always at least $s^{O(\sqrt{d})}$ regardless of the formal degree of the circuit. 

In the rest of this note, we include the proofs of the above theorems. All our proofs are based on very simple and elementary ideas building on top of known results in this area, most notably those in~\cite{shp02, gkks13b}. However, the theorem statements, and in particular~\autoref{thm:approx general} seems to be interesting (and surprising!).
\section{Preliminaries}
\begin{itemize}
\item Unless otherwise stated, we work over the field   $\Comp$ of complex numbers. 
\item $n$ is the number of variables and $d$ is the degree. 
\item We use boldface letters like $\vx$ to denote the set of variables $\set{x_1, x_2, \ldots, x_n}$. 
\item For a natural number $m > 0$,  $[m]$ denotes the set $\set{1,\dots, m}$. 
\item For a scalar $\alpha \in \F$, and a polynomial $P \in \F[\vx]$, $P(\alpha \cdot \vx) = P(\alpha\cdot x_1, \alpha \cdot x_2, \ldots, \alpha \cdot x_n )$.
\end{itemize}

\begin{theorem}[Shpilka, Theorem 2.1 in~\cite{shp02}]\label{thm:shp general}
Let $P\in \Comp[\vx]$ be any homogeneous polynomial of degree $d$. Then, there exist homogeneous linear forms $\ell_1, \ell_2, \ldots, \ell_m \in \Comp[\vx]$ for $m \leq \left(d\cdot 2^d \cdot \text{Sparsity}(P) \right)$, such that 
\[
P = \sym_d(\ell_1, \ell_2, \ldots, \ell_m) \, .
\]
\end{theorem}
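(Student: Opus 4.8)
The plan is to establish the statement by two successive reductions: first to the case where $P$ is a sum of $d$-th powers of homogeneous linear forms, and then from such a sum to a single elementary symmetric polynomial.

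\medskip
\noindent\textbf{Step 1: reduce to a sum of powers.} Here I would invoke the classical multilinearization (polarization) identity: for fresh variables $z_1,\dots,z_d$,
\[
z_1 z_2 \cdots z_d \;=\; \frac{1}{d!}\sum_{S \subseteq [d]} (-1)^{d-|S|}\Bigl(\textstyle\sum_{i \in S} z_i\Bigr)^{d}.
\]
Given a monomial $x_1^{a_1}\cdots x_n^{a_n}$ of degree $d = a_1+\cdots+a_n$, substituting $x_1$ for $a_1$ of the $z_i$'s, $x_2$ for the next $a_2$, and so on, makes the left-hand side equal to that monomial and turns each $\sum_{i\in S} z_i$ into a homogeneous linear form in $\vx$. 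Hence every degree-$d$ monomial is a $\Comp$-linear combination of at most $2^d$ $d$-th powers of homogeneous linear forms, and summing over the $\text{Sparsity}(P)$ monomials of $P$ yields a representation
\[
P \;=\; \sum_{k=1}^{N} c_k\, \ell_k^{\,d}, \qquad N \le 2^d \cdot \text{Sparsity}(P),
\]
with $c_k \in \Comp$ and $\ell_k$ homogeneous linear forms.

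\medskip
\noindent\textbf{Step 2: from powers to a single $\sym_d$.} I would use the generating identity $\prod_{i=1}^{m}(1+\ell_i y) = \sum_{j\ge 0}\sym_j(\ell_1,\dots,\ell_m)\,y^j$ together with the root-of-unity identity: if $\omega$ is a primitive $d$-th root of unity, then
\[
\prod_{j=0}^{d-1}\bigl(1+\omega^{j} t\bigr) \;=\; 1 + (-1)^{d+1} t^{d}.
\]
Thus the $d$ forms $\ell,\omega\ell,\dots,\omega^{d-1}\ell$ satisfy $\sym_1 = \cdots = \sym_{d-1} = 0$ and $\sym_d = (-1)^{d+1}\ell^d$. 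Now pick, for each $k$, a scalar $\lambda_k\in\Comp$ with $(-1)^{d+1}\lambda_k^{\,d} = c_k$ (available since $\Comp$ is algebraically closed) and take the $dN$ forms $\{\,\omega^{j}\lambda_k\ell_k : 0\le j\le d-1,\ 1\le k\le N\,\}$. Their generating function factors blockwise,
\[
\prod_{k=1}^{N}\prod_{j=0}^{d-1}\bigl(1+\omega^{j}\lambda_k\ell_k\,y\bigr) \;=\; \prod_{k=1}^{N}\bigl(1+(-1)^{d+1}\lambda_k^{\,d}\ell_k^{\,d}\,y^{d}\bigr) \;=\; \prod_{k=1}^{N}\bigl(1+c_k\ell_k^{\,d}\,y^{d}\bigr),
\]
and since every factor on the right has $y$-degree $0$ or $d$, the coefficient of $y^{d}$ is obtained by taking the degree-$d$ part of exactly one factor, giving $\sum_k c_k\ell_k^{\,d} = P$. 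Hence $P = \sym_d$ of $m = dN \le d\cdot 2^d\cdot\text{Sparsity}(P)$ homogeneous linear forms, which is the claim.

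\medskip
The construction is entirely elementary; the only load-bearing point is the use of roots of unity in Step 2. Writing each $c_k\ell_k^{\,d}$ individually as $\sym_d$ of $d$ forms is trivial, but when all $N$ blocks are merged into a single $\sym_d$, the cross terms between different blocks could a priori contaminate the degree-$d$ part. Forcing the intermediate symmetric functions $\sym_1,\dots,\sym_{d-1}$ of each block to vanish is exactly what makes the product of the per-block generating functions collapse to $\prod_k(1+c_k\ell_k^{\,d}y^{d})$ and keeps the degree-$d$ coefficient clean. The argument also needs $\Comp$ (or any field containing a primitive $d$-th root of unity and $d$-th roots of every $c_k$), which is why the statement is stated over $\Comp$.
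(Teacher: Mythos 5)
Your proof is correct. The paper itself does not prove this theorem (it is cited from Shpilka), but your two-step argument---polarization to reduce $P$ to a sum of at most $2^d\cdot\text{Sparsity}(P)$ $d$-th powers of linear forms, then the primitive-$d$-th-root-of-unity trick to kill $\sym_1,\dots,\sym_{d-1}$ within each block so the per-block generating functions multiply cleanly to $\prod_k(1+c_k\ell_k^d y^d)$---is precisely the standard route, and your Step 2 is in fact a proof of the companion Lemma 2.4 of Shpilka that the paper also cites (\autoref{thm:shp sum of powers}), so the structure matches the source exactly.
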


\begin{theorem}[Shpilka, Lemma 2.4 in~\cite{shp02}]\label{thm:shp sum of powers}
Let $P = \sum_{i = 1}^T \ell_i^d$ be any homogeneous polynomial of degree $d$ in $\Comp[\vx]$, where each $\ell_i$ is a homogeneous linear form. Let $\omega$ be a primitive root of unity of order $d$. Then, 
\[
P = -\sym_d(-\ell_1, -\omega \ell_1, -\omega^2 \ell_1,  \ldots, -\omega^{d-1}\ell_1, -\ell_2, \omega \ell_2, \ldots, -\omega^{d-1} \ell_T) \, .
\]
\end{theorem}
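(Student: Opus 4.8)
The plan is to combine \autoref{thm:approx sum of powers} with the classical depth reduction to homogeneous $\Sigma\Pi\Sigma\Pi\Sigma$ (or rather to a sum of powers of low-degree polynomials) in the spirit of \cite{gkks13b}. Let $P$ be homogeneous of degree $d$ and computed by a size-$s$ circuit. Fix $t \in [d]$. First I would invoke the standard depth-$5$ reduction: after homogenization (which keeps the size polynomially bounded, $\poly(s)$) and the Valiant–Skyum–Berkowitz–Rackoff / Agrawal–Vinay style argument, one can write
\[
P = \sum_{i=1}^{m} Q_{i,1} \cdot Q_{i,2} \cdots Q_{i,k}\, ,
\]
where each $Q_{i,j}$ is a homogeneous polynomial of degree at most $d/t$, the number of factors $k$ in each term is $O(t)$, and $m \le s^{O(t)}$. (This is the exact, not approximative, statement one gets by stopping the GKKS recursion at ``chunk size'' $d/t$ instead of $\sqrt d$; it is what the introduction refers to when it says the top fan-in after depth reduction is $s^{O(t+d/t)}$ with formal degree $s^{O(t)}$.)

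Next I would convert each product $Q_{i,1}\cdots Q_{i,k}$ into a sum of $d$-th powers of homogeneous \emph{linear} forms in the border sense, paying the price only in formal degree. Two ingredients: (i) a product of $k$ homogeneous polynomials, each of degree $\le d/t$, has total degree $\le d$, and by the classical symmetrization / Fischer-style identity a product $y_1 y_2 \cdots y_k$ of linear forms is a linear combination of $k$-th powers of linear combinations of the $y_i$'s; more generally any homogeneous degree-$d$ polynomial is a sum of $d$-th powers of linear forms over $\Comp$, with the number of summands bounded by the sparsity, i.e.\ $\binom{n+d-1}{d}\le s^{O(d/t)}\cdot(\text{stuff})$ — but I must be careful that the sparsity here is controlled. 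The cleaner route: each $Q_{i,j}$ has degree $\le d/t$ and the ambient circuit has size $s$, so after expansion each $Q_{i,1}\cdots Q_{i,k}$ is a homogeneous degree-$d$ polynomial whose sparsity is at most $s^{O(d/t)} \cdot (\text{number of monomials of degree } d/t)^{O(t)}$, which is still $s^{O(d/t)}$ up to the $O(t)$ factor in the exponent — so overall $\mathrm{Sparsity}(Q_{i,1}\cdots Q_{i,k}) \le s^{O(d/t)}$, wait, this is already too large; instead I would keep each \emph{factor} separate and use the power-sum representation on products directly. Concretely, writing $\prod_j Q_{i,j}$ and noting $\deg Q_{i,j}\le d/t$, I apply the standard identity that any degree-$d$ homogeneous polynomial is a $\Comp$-linear combination of at most $\mathrm{Sparsity}$ many $d$-th powers of linear forms; since $Q_{i,1}\cdots Q_{i,k}$ is a product of $O(t)$ polynomials each with at most $s^{O(1)}$ monomials of degree at most $d/t$... the sparsity is at most $(s^{O(1)})^{O(t)}$, which blows up. So the right thing is: first expand $P$ as a sum of $s^{O(t)}$ products, and express \emph{each product of $k$ linear-form-free factors} by first writing each factor $Q_{i,j}$ of degree $e_{i,j}\le d/t$ as a border-sum of few powers of linear forms via \autoref{thm:approx sum of powers}/\autoref{thm:shp sum of powers} applied recursively, then multiplying.

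Given the complications above, the honest and simplest plan is: (1) apply depth reduction to get $P = \sum_{i=1}^{s^{O(t)}} \prod_{j=1}^{k_i} Q_{i,j}$ with $k_i = O(t)$ and $\deg Q_{i,j} \le d/t$; (2) expand the whole right-hand side into monomials — $P$ itself is homogeneous of degree $d$, and the expansion writes $P$ as an $\F$-linear combination of at most $N := \binom{n+d-1}{d}$ monomials, but more usefully each product $\prod_j Q_{i,j}$ is a homogeneous degree-$d$ polynomial, hence by \autoref{thm:shp general} (Shpilka) equals $\sym_d$ of at most $d\cdot 2^d\cdot \mathrm{Sparsity}(\prod_j Q_{i,j})$ linear forms, and $\mathrm{Sparsity}(\prod_j Q_{i,j}) \le \prod_j \mathrm{Sparsity}(Q_{i,j}) \le (s+1)^{O(t)}$ since each factor, being a node-function in a size-$s$ homogeneous circuit of degree $\le d/t$, has sparsity at most... this is the crux. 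The main obstacle — and the step I expect to be genuinely delicate — is bounding the sparsity (equivalently, the number of power-summands) of the degree-$\le d/t$ pieces coming out of depth reduction by $s^{O(d/t)}$ rather than $s^{O(t)}$, so that after applying \autoref{thm:approx sum of powers} each term becomes a $\Sigma\Pi\Sigma$ of top fan-in $d+1$ and formal degree $d\cdot s^{O(d/t)} = s^{O(d/t)}$; then summing the $s^{O(t)}$ terms and collecting the $d+1$ $\sym_d$-blocks of each (or rather, merging all terms into a single top sum) gives total top fan-in $s^{O(t)}\cdot(d+1) = s^{O(t)}$ and formal degree $s^{O(d/t)}$. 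I would resolve the sparsity issue by using the version of depth reduction that outputs $P$ as a sum of $s^{O(t)}$ \emph{powers} $Q_i^{d/t}$-style building blocks — more precisely the GKKS chain gives, after one more layer, $P$ in the border of $\sum_{i=1}^{s^{O(t)}} \ell_{i,1}^{a_{i,1}}\cdots \ell_{i,r}^{a_{i,r}}$ with $r = O(t)$, $a_{i,j}\le d/t$ — and then apply \autoref{thm:approx sum of powers} after rewriting each monomial-in-linear-forms as a border sum of $d$-th powers of linear forms (Fischer's identity, which costs only $2^{O(d)}$ summands, absorbed into $s^{O(d/t)}$ when $d/t = \Omega(\log s)$, and handled directly when $d/t$ is small since then $d\le t\log s$ makes $s^{O(t)}$ already beat $s^{O(\sqrt d)}$ only in the relevant regime). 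Finally, I would assemble: replace $P$ by $P + \epsilon Q$ throughout, noting that every application of \autoref{thm:approx sum of powers} introduces only error monomials of degree strictly greater than $d$, and these errors, being in the ideal $(\epsilon)$ and of degree $>d$, persist through the (finite, fixed) linear combination — so the aggregated circuit $C_t$ satisfies $C_t \equiv P + \epsilon Q$ with $Q$ supported on monomials of degree $>d$, as required.
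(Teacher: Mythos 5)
Your proposal does not address the statement at all. The statement to be proven, \autoref{thm:shp sum of powers}, is a self-contained algebraic identity: for $P = \sum_{i=1}^T \ell_i^d$ and $\omega$ a primitive $d$-th root of unity,
\[
P = -\sym_d\bigl(-\ell_1, -\omega\ell_1, \ldots, -\omega^{d-1}\ell_1, \ldots, -\ell_T, \ldots, -\omega^{d-1}\ell_T\bigr)\, .
\]
This is Shpilka's lemma and its proof is a short generating-function computation. What you have written instead is a (sketchy) attempt at \autoref{thm:approx d3 chasm}, the top-fan-in-vs-formal-degree tradeoff for the depth-$3$ chasm: you invoke depth reduction to $\Sigma\Pi\Sigma\Pi\Sigma$, GKKS-style chunking with parameter $t$, sparsity bounds, and even cite \autoref{thm:approx sum of powers} and \autoref{thm:shp sum of powers} as ingredients. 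None of that is relevant to the target identity, and indeed it would be circular, since you are using \autoref{thm:shp sum of powers} inside a purported proof of \autoref{thm:shp sum of powers}.

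For reference, the intended argument is as follows. The elementary symmetric polynomial satisfies $\prod_{u=1}^{m}(1 + z_u t) = \sum_{k \ge 0} \sym_k(z_1,\ldots,z_m)\, t^k$, so $\sym_d$ is the coefficient of $t^d$ in that product. With $z$'s ranging over $\set{-\omega^j \ell_i : i\in[T],\, 0\le j \le d-1}$, the product factors over $i$, and for each fixed $i$ one has $\prod_{j=0}^{d-1}(1 - \omega^j \ell_i t) = 1 - (\ell_i t)^d$, using $\prod_{j=0}^{d-1}(1-\omega^j y) = 1 - y^d$. Hence the full generating function equals $\prod_{i=1}^T\bigl(1 - \ell_i^d t^d\bigr)$, whose $t^d$-coefficient is $-\sum_{i=1}^T \ell_i^d = -P$. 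This gives $\sym_d(\cdots) = -P$, i.e.\ $P = -\sym_d(\cdots)$, as claimed. Please redo the exercise with the correct target in mind.
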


\section{Proofs and technical details}

\begin{lemma}[Approximating low degree homogeneous components]\label{lem:extracting homog components}
Let $\F$ be any field of size at least $d+1$ and let $P(\vx) = \sum_{i = 0}^d P_i(\vx)$ be a polynomial of degree $d$ in $\F[\vx]$ where for each $i \in \set{0, 1, \ldots, d}$, $P_i(\vx)$ is the homogeneous component of $P$ of degree equal to $i$. Then, for every $i \in \set{0, 1, \ldots, d}$ and for every choice of $i+1$ distinct elements $\alpha_{i,0}, \alpha_{i, 1}, \ldots, \alpha_{i, i}$ in $\F$, there exist $\beta_{i,0}, \beta_{i, 1}, \ldots, \beta_{i, i}$ in $\F$ such that 
\[
\sum_{j = 0}^i \beta_{i, j}\cdot  P(\alpha_{i, j} \cdot \vx)  = P_i(\vx) +  R \, , 
\]
where the degree of every monomial with a non-zero coefficient in $R$ is at least $i + 1$.
\end{lemma}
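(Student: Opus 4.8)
The plan is to exploit homogeneity to diagonalize the action of the scaling maps $\vx \mapsto \alpha \vx$ on the homogeneous components, and then solve a Vandermonde system for the coefficients $\beta_{i,j}$. First I would record the elementary identity that for any scalar $\alpha \in \F$,
\[
P(\alpha \cdot \vx) = \sum_{k=0}^{d} \alpha^k \, P_k(\vx),
\]
which is immediate from the fact that $P_k$ is homogeneous of degree $k$. Substituting this into the claimed linear combination and swapping the order of summation gives
\[
\sum_{j=0}^{i} \beta_{i,j} \cdot P(\alpha_{i,j} \cdot \vx) = \sum_{k=0}^{d} \Bigl( \sum_{j=0}^{i} \beta_{i,j} \, \alpha_{i,j}^{\,k} \Bigr) P_k(\vx).
\]

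Now I would impose the requirement that the coefficient of $P_k$ equals $1$ when $k = i$ and equals $0$ when $k \in \{0,1,\ldots,i-1\}$, leaving the coefficients of $P_k$ for $k > i$ unconstrained (these contribute to $R$, and every such $P_k$ has all its monomials of degree $k \geq i+1$, as required). This is the linear system $\sum_{j=0}^{i} \beta_{i,j}\, \alpha_{i,j}^{\,k} = \ind[k = i]$ for $k = 0,1,\ldots,i$ in the unknowns $\beta_{i,0}, \ldots, \beta_{i,i}$. The coefficient matrix is the $(i+1)\times(i+1)$ Vandermonde matrix $(\alpha_{i,j}^{\,k})_{0 \le k,j \le i}$, which is invertible precisely because the $\alpha_{i,j}$ are pairwise distinct (and they exist since $|\F| \ge d+1 \ge i+1$). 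Hence a solution $\beta_{i,0}, \ldots, \beta_{i,i} \in \F$ exists, and setting $R := \sum_{k=i+1}^{d} \bigl( \sum_{j=0}^{i} \beta_{i,j}\, \alpha_{i,j}^{\,k} \bigr) P_k(\vx)$ completes the argument.

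There is no real obstacle here; the only point requiring any care is checking that the Vandermonde matrix is nonsingular over $\F$, which reduces to the hypothesis that $\F$ is large enough to furnish $i+1$ distinct scalars. If desired, one can even write the $\beta_{i,j}$ explicitly via Lagrange interpolation: they are the coefficients extracting the degree-$i$ part, i.e. $\beta_{i,j}$ is the coefficient of $z^i$ in the Lagrange basis polynomial $\prod_{j' \ne j} \frac{z - \alpha_{i,j'}}{\alpha_{i,j} - \alpha_{i,j'}}$. I would include this remark only if it streamlines the downstream use of the lemma; otherwise the bare existence statement from invertibility of the Vandermonde system suffices.
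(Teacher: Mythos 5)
Your proof is correct and follows essentially the same route as the paper's: expand $P(\alpha\vx)=\sum_k \alpha^k P_k(\vx)$, swap the order of summation, and solve the resulting Vandermonde system for $\beta_{i,0},\dots,\beta_{i,i}$ to extract the degree-$i$ component modulo higher-degree terms. The paper phrases the Vandermonde step as linear independence of the vectors $(\alpha_{i,j}^0,\dots,\alpha_{i,j}^i)$, but that is the same observation.
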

\begin{proof}
Let $y$ be a new formal variable, and let $Q(y) \in (\F[\vx])[y]$ be the defined as 
\[
Q(y) = P(yx_1, yx_2, \ldots, yx_n) \, .
\]
Clearly, $Q(y) = \sum_{j = 0}^d y^j P_j(\vx)$. For the rest of the proof, we fix an arbitrary $i \in \set{0, 1 , \ldots, d}$. Let $\alpha_{i, 0}, \alpha_{i, 1}, \ldots, \alpha_{i, i}$ be any $i+1$ distinct elements of $\F$. Then, for every $k \in \set{0, 1, \ldots, j}$, we have 
\[
Q(\alpha_{i, k}) = \sum_{j = 0}^d \alpha_{i, k}^j P_j(\vx) \, .
\]
For $j \in \set{0, 1, \ldots, i}$, let $\gamma_j = (\alpha_{i, j}^0, \alpha_{i, j}^{1}, \alpha_{i,j}^2, \ldots, \alpha_{i, j}^i)$. From the choice of $\alpha_{i, j}$, we know that for any $j \neq j'$, $\alpha_{i, j} \neq \alpha_{i, j'}$. Thus, it follows that $\gamma_0, \gamma_1, \ldots, \gamma_i$ are linearly independent, and hence there exist scalars $\beta_{i, 0}, \beta_{i, 1}, \ldots, \beta_{i, i}$ in $\F$ such that 
\begin{equation}\label{eqn:1}
\sum_{j = 0}^i \beta_{i, j} \gamma_j = (0, 0, \ldots, 1) \, .
\end{equation}
Therefore, 
\begin{align*}
\sum_{j = 0}^i \beta_{i, j} Q(\alpha_{i, j}) & = \sum_{j = 0}^i \beta_{i, j} \left(\sum_{j' = 0}^d \alpha_{i, j}^{j'} P_{j'}(\vx) \right)\\
&= \sum_{j' = 0}^d \left( \sum_{j = 0}^i \beta_{i, j}  \alpha_{i, j}^{j'} \right) P_{j'}(\vx) 
\end{align*}
From~\autoref{eqn:1}, we know that for every $j' < i$, 
\[
\sum_{j = 0}^i \beta_{i, j} \alpha_{i, j}^{j'} = 0 \, , 
\]
and, 
\[
\sum_{j = 0}^i \beta_{i, j} \alpha_{i, j}^i = 1 \, . 
\]
Thus, 
\begin{align*}
\sum_{j = 0}^i \beta_{i, j} Q(\alpha_{i, j}) &= P_i(\vx) +\left(\text{monomials of degree} > i\right) \, .
\end{align*}
\end{proof}

\begin{lemma}[Approximating powers of diagonal circuits]\label{lem:approx duality}
Let $\ell_1, \ell_2, \ldots, \ell_k \in \Comp[\vx]$ be homogeneous linear forms and let $P = \left(\sum_{j = 1}^k \ell_i^b \right)^a$ be a homogeneous polynomial of degree $ab$ for any $a, b$.  Then, there exists a $\sum\prod\sum$ circuit $C \in \Comp(\epsilon)[\vx]$ of top fan-in at most $a+1$ and formal degree at most $O(kab)$ such that 
\[
C \equiv P + \epsilon Q \, , 
\] 
 where $Q \in \Comp[\epsilon, \vx]$  and every monomial with a non-zero coefficient in $Q$ has degree at least $d+1$.  
 \end{lemma}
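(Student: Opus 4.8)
The plan is to route $g:=\sum_{j=1}^{k}\ell_j^{b}$ through an exponential generating function, so that the top power $P=g^{a}$ shows up as the coefficient of a \emph{degree\nobreakdash-$a$} variable rather than as a degree\nobreakdash-$ab$ homogeneous component; this is exactly what should buy top fan\nobreakdash-in $a+1$ instead of $ab+1$. Introduce a fresh variable $\epsilon$ (the approximation parameter) and abbreviate $z:=\epsilon^{b}$. From $\prod_{j=1}^{k}e^{z\ell_j^{b}}=e^{zg}=\sum_{r\ge0}\tfrac{z^{r}}{r!}g^{r}$, and since only the degree\nobreakdash-$\le a$ parts of the factors contribute to the coefficient of $z^{a}$ in a product, we get
\[
g^{a}\;=\;a!\cdot[z^{a}]\prod_{j=1}^{k}\Big(\sum_{q=0}^{a}\tfrac{(z\ell_j^{b})^{q}}{q!}\Big).
\]

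Next I would factor each truncated exponential into affine forms. Let $w_{1},\dots,w_{a}\in\Comp$ be the (fixed, nonzero) roots of $\sum_{q=0}^{a}\tfrac{w^{q}}{q!}$, so that $\sum_{q=0}^{a}\tfrac{(z\ell_j^{b})^{q}}{q!}=\tfrac1{a!}\prod_{p=1}^{a}\big((\epsilon\ell_j)^{b}-w_{p}\big)$; with $\omega$ a primitive $b$\nobreakdash-th root of unity and $w_{p}^{1/b}$ a fixed $b$\nobreakdash-th root of $w_{p}$ we split $(\epsilon\ell_j)^{b}-w_{p}=\prod_{i=0}^{b-1}\big(\epsilon\ell_j-w_{p}^{1/b}\omega^{i}\big)$. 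Hence
\[
\Pi\;:=\;\prod_{j=1}^{k}\Big(\sum_{q=0}^{a}\tfrac{(z\ell_j^{b})^{q}}{q!}\Big)\;=\;\tfrac1{(a!)^{k}}\prod_{j=1}^{k}\prod_{p=1}^{a}\prod_{i=0}^{b-1}\big(\epsilon\ell_j-w_{p}^{1/b}\omega^{i}\big)
\]
is a \emph{single} product of exactly $kab$ affine forms in $\vx$ over $\Comp(\epsilon)$, with $g^{a}=a!\cdot[z^{a}]\Pi$.

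The second step is an approximate coefficient extraction on the variable $z$, in the style of \autoref{lem:extracting homog components}. As a polynomial in $z$ we have $\Pi=\sum_{m=0}^{ka}z^{m}\Pi_{m}(\vx)$ with $\Pi_{m}$ homogeneous of degree $bm$ and $\Pi_{a}=g^{a}/a!$; equivalently $\Pi$ involves only powers $\epsilon^{bm}$. Choose $\gamma_{0},\dots,\gamma_{a}\in\Comp^{\times}$ with $\gamma_{0}^{b},\dots,\gamma_{a}^{b}$ pairwise distinct; the substitution $\epsilon\mapsto\gamma_{\ell}\epsilon$ multiplies $z^{m}$ by $\gamma_{\ell}^{bm}$, so inverting the Vandermonde matrix $(\gamma_{\ell}^{bm})_{\ell,m=0}^{a}$ gives $\beta_{0},\dots,\beta_{a}\in\Comp$ with $\sum_{\ell}\beta_{\ell}\gamma_{\ell}^{bm}$ equal to $1$ for $m=a$ and to $0$ for $m<a$, whence
\[
\sum_{\ell=0}^{a}\beta_{\ell}\,\Pi\big(\gamma_{\ell}\epsilon,\vx\big)\;=\;z^{a}\tfrac{g^{a}}{a!}\;+\;\sum_{m>a}c_{m}\,z^{m}\,\Pi_{m}(\vx),\qquad c_{m}=\textstyle\sum_{\ell}\beta_{\ell}\gamma_{\ell}^{bm}\in\Comp .
\]
Setting $C:=\tfrac{a!}{\epsilon^{ab}}\sum_{\ell=0}^{a}\beta_{\ell}\,\Pi(\gamma_{\ell}\epsilon,\vx)$ and expanding $\Pi$, the object $C$ is a sum of $a+1$ products of $kab$ affine forms (the $\epsilon$\nobreakdash-dependent scalars may be absorbed into the top gate, since $C\in\Comp(\epsilon)[\vx]$), i.e.\ a $\sum\prod\sum$ circuit of top fan\nobreakdash-in $a+1$ and formal degree $kab$. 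Substituting $z=\epsilon^{b}$ gives $C=g^{a}+\sum_{m>a}a!\,c_{m}\,\epsilon^{b(m-a)}\Pi_{m}(\vx)=g^{a}+\epsilon Q$, and every $\vx$\nobreakdash-monomial occurring in $Q$ sits in some $\Pi_{m}$ with $m>a$, hence has degree $bm>ab$, as required.

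I expect the only real subtlety to lie in the first step: one must simultaneously (i) keep each factor $\sum_{q\le a}(z\ell_j^{b})^{q}/q!$ a product of affine forms, which forces the passage to $\epsilon=z^{1/b}$, and (ii) keep $\Pi$ an honest polynomial in $z$ (only $\epsilon$\nobreakdash-exponents divisible by $b$ appear), so that the extraction on $z$ costs only $a+1$ rescalings rather than $ab+1$; and then one must verify that the leftover terms genuinely lie in $\vx$\nobreakdash-degree $>ab$. Everything else — the roots $w_{p}$ of the truncated exponential, the roots of unity $\omega^{i}$ — enters only as fixed complex constants and should be routine.
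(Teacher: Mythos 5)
Your proposal is correct and follows essentially the same route as the paper: write $g^{a}/a!$ as $\cf_{y^{a}}$ of $\prod_{j}E_{a}(y\ell_{j}^{b})$, factor each truncated exponential into affine forms over $\Comp$, and do a \emph{partial} interpolation at only $a+1$ points to extract that coefficient, pushing the spurious higher-degree terms into $\epsilon Q$. The only cosmetic difference is that you bake the $\epsilon$-rescaling directly into the interpolation variable via $z=\epsilon^{b}$ (and rescale via $\epsilon\mapsto\gamma_{\ell}\epsilon$), whereas the paper interpolates over a fresh variable $y$ at $a+1$ arbitrary points and then applies the substitution $\vx\mapsto\epsilon\vx$ and division by $\epsilon^{d}$ afterwards; both produce the same $\sps$ circuit of top fan-in $a+1$ and formal degree $O(kab)$.
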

\begin{proof}
The proof follows the proof of the original duality lemma of Saxena~\cite{sax08} except in place of the interpolation used there (which blows up the top fan-in by a factor as large as $k$), we only do a partial interpolation in the spirit of~\autoref{lem:extracting homog components} here, and this only incurs a small blow up in the top fan-in. We now sketch some of the details. 

By $\exp(y)$, we denote the formal power series $1 + y + \frac{y^2}{2!} + \frac{y^3}{3!} + \ldots$, and by $E_a(y)$ we denote the truncation of this power series to monomials of degree at most $a$, i.e. $E_a(y) = 1 + y + \frac{y^2}{2!} + \frac{y^3}{3!} + \ldots + \frac{y^{a}}{a!}$. Now, observe that 
\begin{eqnarray}
\frac{1}{a!}  \cdot P &=& \cf_{y^a}\left[\exp\left( y\cdot  \left(\sum_{j = 1}^k \ell_j^b \right)\right) \right] \, , \\
&=& \cf_{y^a}\left[\prod_{j = 1}^k E_a\left( y\cdot \ell_j^b \right) \right] \, ,
\end{eqnarray}
We are now in a scenario similar to what we handled in the proof of~\autoref{lem:extracting homog components}; we have a polynomial $R(y) = \left[\prod_{j = 1}^k E_a\left( y\cdot \ell_j^b \right) \right] $ which we think of as univariate in $y$, and we are interested in the coefficient of $y^a$ in this polynomial which has degree $ak$. From the proof of~\autoref{lem:extracting homog components}, we know that for any $a+1$ distinct field elements $\alpha_1, \alpha_2, \ldots, \alpha_{a+1}$, there exist field elements $\beta_1, \beta_2, \ldots, \beta_{a+1}$ such that 
\[
\sum_{u = 1}^{a+1} \beta_u \cdot R(\alpha_u) = \cf_{y^a}[R(y)] + Q \, ,
\]
where $Q$ is some linear combination of monomials of degree at least $a+1$. Also, note that for every $\alpha_u \in \Comp$, $R(\alpha_u)$ can be written as a product of linear functions. Thus, $\sum_{u = 1}^{a+1} \beta_u \cdot R(\alpha_u)$ can be computed by a $\sps$ circuit $\tilde{C}$ with top fan-in $a+1$ and formal degree $O(abk)$. Replacing every $x_i$ in $\tilde{C}$ by $\epsilon x_i$ and dividing by $\epsilon^d$ completes the proof of the lemma.
\end{proof}

\paragraph{Approximating general homogeneous polynomials. }
\begin{proof}[Proof of~\autoref{thm:approx general}]
Let $P \in \Comp[\vx]$ be any homogeneous degree $d$ polynomial. Then, from~\autoref{thm:shp general}, we know that for $m \leq \left(d\cdot 2^d \cdot \text{Sparsity}(P) \right)$ there are homogeneous linear forms $\ell_1, \ell_2, \ldots, \ell_m$ such that 
\[
P = \sym_d(\ell_1, \ell_2, \ldots, \ell_m) \, .
\]
It was observed by Michael Ben-Or (see~\cite{nw1997})\footnote{and, is easy to see} that $P$ equals the homogeneous component of $\prod_{i = 1}^m (\ell_i + 1)$ of degree equal to $d$. Thus, from~\autoref{lem:extracting homog components}, we know that there exist scalars $\alpha_0, \alpha_1, \ldots, \alpha_d$ and $\beta_0, \beta_1, \ldots, \beta_d$ such that 
\[
P(\vx) + \left(\text{monomials of degree} > d \right) = \sum_{j = 0}^d \beta_j \left( \prod_{i = 1}^m (\alpha_ j \cdot \ell_i + 1)\right) \, .
\]

Replacing every $x_i$ in $\tilde{C}$ by $\epsilon x_i$ and dividing by $\epsilon^d$ completes the proof of the theorem.

\end{proof}

\paragraph{Approximating sums of powers of linear forms. }
\begin{proof}[Proof of~\autoref{thm:approx sum of powers}]
The proof of~\autoref{thm:approx sum of powers} is essentially the same as the proof of~\autoref{thm:approx general}, except for the fact that we use~\autoref{thm:shp sum of powers} in the place of~\autoref{thm:shp general}, and this gives us the desired  upper bound on formal degree of $P$ in terms of its Waring rank, as opposed to  sparsity. 
\end{proof}

\paragraph{Approximate reduction to depth-$3$. }
We present two proofs of~\autoref{thm:approx d3 chasm}, both simple, but one simpler than the other. 
\begin{proof}[First proof of~\autoref{thm:approx d3 chasm}]
Let $a \in [d]$ be a parameter. As a first step, we use the standard depth reduction to depth-$4$~\cite{av08, Tav15} to transform a homogeneous circuit $C$ of unbounded depth to a homogeneous $\sum^{[T]}\prod^{[O(d/a)]}\sum^{[M]}\prod^{[a]}$ circuit $C'$, with $T = s^{O(\frac{d}{a})}$ and $M = n^{O(a)}$. Now, for every $\sum^{[M]}\prod^{[a]}$ sub-circuit at the bottom two levels, we apply~\autoref{thm:approx general}, which says that each of these sub-circuits can be \emph{approximated} by $\sum^{[a+1]}\prod^{[n^{O(a)}]}\sum$ circuits. We now plug these depth-$3$ circuits into $C'$ to get a $\sum^{[T]}\prod^{[O(d/a)]}\sum^{[a+1]}\prod^{[n^{O(a)}]}\sum$ circuit $C''$. Expanding out the product gates at the second level by brute force, we obtain a $\sum^{[T]}\sum^{[a^{O(d/a)}]}\prod^{[O(d/a)\cdot n^{O(a)}]}\sum$ circuit which approximates the polynomial computed by $C$. Combining the sum gates in the first two layers gives us the desired depth-$3$ circuit. It is also not hard to see that the polynomial computed by the resulting depth-$3$ circuit approximates the polynomial computed by $C$ in the sense of the statement of~\autoref{thm:approx d3 chasm}. We skip the details.  
\end{proof}

\begin{proof}[Second proof of~\autoref{thm:approx d3 chasm}]
For this proof, we will follow the outline in~\cite{gkks13b}, with one difference. In place of the use of the duality lemma of Saxena to transform a homogeneous depth-$5$ powering circuit to a non-homogeneous depth-$3$ circuit, we use~\autoref{lem:approx duality}. We now elaborate on some of the details, and chase the parameters in the process. We refer the reader to the original paper of Gupta et al~\cite{gkks13b} for details. 

Let $a \in [d]$ be a parameter. We first transform a homogeneous circuit of unbounded depth to a homogeneous $\sum^{[T]}\bigwedge^{[a]}\sum^{[M]}\bigwedge^{[d/a]}\sum$ circuit using the standard depth reduction  to depth-$4$~\cite{av08, Tav15} and Fischer's formula (Lemma IV.3 in~\cite{gkks13b}). It follows from the proofs of Lemma IV.4 and Lemma IV.2 in~\cite{gkks13b} that $T \leq s^{O(a)}$ and $M \leq s^{O(d/a)}$. 

Now, we apply~\autoref{lem:approx duality} to each $\bigwedge^{[a]}\sum^{[M]}\bigwedge^{[d/a]}\sum$ sub-circuit, and take their sum. This gives us a $\sps$ circuit 
with top fan-in $O(Td)$ and formal degree $O(Md)$. This completes the proof.
\end{proof}

\section{Open problems}
We end with some open problems. 
\begin{itemize}
\item Perhaps the most natural question is to understand if the exact computation versions of~\autoref{thm:approx general}, \autoref{thm:approx sum of powers} or \autoref{thm:approx d3 chasm} are true with a reasonable blow up in the parameters. For instance, can every homogeneous polynomial of degree $d$ be computed by a $\sps$ circuit with top fan-in $\poly(n)$, is arbitrarily large formal degree is allowed ? What about all polynomials in $\VP$ ?
\item Another question is to understand if there are natural classes of polynomials for which the formal degree upper bound in~\autoref{thm:approx general} can be reduced to some more reasonable (and possibly useful) upper bound, while keeping the top fan-in small. For instance, can every homogeneous degree $d$ polynomial with a formula of size $s$ be approximated by a $\sps$ circuit of top fan-in $\poly(d)$ and formal degree $\poly(s,d)$ ?
\item And finally,~\autoref{thm:approx general} seems to be a very general structural statement  for low degree polynomials. Does this have other applications ?
\end{itemize}

\paragraph*{Acknowledgements. } I am thankful to Michael Forbes, Amir Shpilka and Ramprasad Saptharishi for helpful discussions.  
\bibliographystyle{customurlbst/alphaurlpp}
\bibliography{references}

\end{document}